\newtheorem{definition}{Definition}
\newtheorem{theorem}{Theorem}
\newcommand{\eat}[1]{}                
\newcommand{\zo}{{\{0,1\}}}
\newcommand{\A}{$\mathcal{A}$~}
\newcounter{linecounter}
\newcommand{\Ra}{\ensuremath {\stackrel{\$}{\leftarrow}{\xspace}}}
\newcommand{\as}{\ensuremath {\leftarrow}{\xspace}}
	\newcommand{\sk}{\ensuremath {\mathit{sk}}{\xspace}}
\newcommand{\pk}{\ensuremath {\mathit{pk}}{\xspace}}
\newcommand{\eda}{\ensuremath {\texttt{\textbf{ARIS}}}{\xspace}}
\newcommand{\edakg}{\ensuremath {\texttt{\textbf{ARIS}.Kg}}{\xspace}}
\newcommand{\edasig}{\ensuremath {\texttt{\textbf{ARIS}.Sig}}{\xspace}}
\newcommand{\edaver}{\ensuremath {\texttt{\textbf{ARIS}.Ver}}{\xspace}}
\newcommand{\Fq}{\ensuremath {\mathbb{F}_q}{\xspace}}
\newcommand{\EC}{\ensuremath {E(\Fq) }{\xspace}}
\newcommand{\Zp}{\ensuremath {\mathbb{Z}_p}{\xspace}}
\newcommand{\prf}{\ensuremath {\texttt{PRF}}{\xspace}}
\newcommand{\hash}{\ensuremath {\texttt{H}}{\xspace}}
 \newcommand{\B}{\ensuremath{\mathcal{B}}{\xspace}}
  \newcommand{\LL}{\ensuremath{\mathcal{L}}{\xspace}}
\newcommand{\specialcell}[2][c]{%
	\begin{tabular}[#1]{@{}c@{}}#2\end{tabular}}
\newcommand\blfootnote[1]{%
	\begingroup
	\renewcommand\thefootnote{}\footnote{#1}%
	\addtocounter{footnote}{-1}%
	\endgroup
}
\def\endthebibliography{%
	\def\@noitemerr{\@latex@warning{Empty `thebibliography' environment}}%
	\endlist
}
\begin{document}

\title{\eda: Authentication for Real-Time IoT Systems}

\IEEEoverridecommandlockouts


\author{\IEEEauthorblockN{Rouzbeh Behnia$^\star $} \thanks{$^\star$Work done in part when Rouzbeh Behnia and Muslum Ozgur Ozmen were at Oregon State University.}
\IEEEauthorblockA{University of South Florida\\
	Tampa, Florida \\
	behnia@mail.usf.edu}
\and
\IEEEauthorblockN{Muslum Ozgur Ozmen$^\star $}
\IEEEauthorblockA{University of South Florida\\
	Tampa, Florida \\
	ozmen@mail.usf.edu}
\and

\IEEEauthorblockN{Attila A. Yavuz}
\IEEEauthorblockA{University of South Florida\\
Tampa, Florida \\
attilaayavuz@usf.edu}}

\maketitle

\begin{abstract}
Efficient authentication is  vital for IoT applications with stringent  minimum-delay requirements (e.g., energy delivery systems). This requirement becomes even more crucial when the  IoT devices are battery-powered, like small aerial drones, and the efficiency of authentication directly translates to more operation time.  Although some fast authentication techniques have been proposed, some of them might not fully meet the needs of the emerging delay-aware IoT. 


In this paper, we propose a new  signature scheme called \eda~that pushes the  limits of the existing  digital signatures, wherein a commodity hardware can verify 83,333 signatures per second. \eda~also enables the fastest signature generation along with the lowest energy consumption and end-to-end delay among its counterparts. These significant computational advantages come with a larger storage requirement, which is a favorable trade-off for some critical delay-aware applications. These desirable features are achieved by harnessing message encoding with cover-free families and a special elliptic curve based one-way function.  We prove the security of \eda~under the hardness of the elliptic curve discrete logarithm problem in the random oracle model. We provide an open-sourced implementation of \eda~on commodity hardware and 8-bit AVR microcontroller for public testing and verification.
\blfootnote{© 2019 IEEE. Personal use of this material is permitted. Permission from IEEE must be obtained for all other uses, in any current or future media, including reprinting/republishing this material for advertising or promotional purposes, creating new collective works, for resale or redistribution to servers or lists, or reuse of any copyrighted component of this work in other works.}

\end{abstract}
\vspace{1mm}
\begin{IEEEkeywords} Authentication; Internet of Things; digital signatures;  delay-aware systems; applied cryptography.  \end{IEEEkeywords}

 \section{Introduction}\label{sec:Introduction}

 IoT systems often need authentication  for applications that need to verify  a large volume of incoming transactions or commands. While   symmetric key primitives (e.g., HMAC) can provide very fast authentication, they fail to offer non-repudiation which is often vital for these applications. For instance, Visa handles   millions of transactions every day \cite{VisaCard}.  Each transaction corresponds to multiple authentications of the user's request and card information on merchant's side, payment gateway and credit card issuer \cite{CreditCardinfra}.  Therefore,  creating more efficient solutions can significantly reduce the overall authentication overhead of such systems that results in substantial financial gains. 
 
The need for efficient authentication becomes even more imperative for applications in which IoT devices must operate in safety-critical settings and/or with battery limitations. For instance, battery-powered aerial drones \cite{Drone:Elisa:Won:2015} might communicate and authenticate streams of commands and measurements with an operation center in  a short period of time. A fast and energy-efficient authentication can improve the flight and response time of such aerial drones \cite{Dronecrypt}. Other IoT applications such as smart grid  systems, which involve battery-powered sensors, will also  benefit from fast and energy-efficient digital signatures  which minimize the   authentication delay/overhead and improve the operation time of  the  sensors~\cite{SmartGrid:PMU:Experiment:Journal:2017}.
Additionally,  in vehicular networks, safety significantly hinges  on the  end-to-end delay~\cite{IEEE1609.2_SecServices}, and therefore attaining  a signature scheme with the lowest end-to-end delay is always desired.

\subsection{Our Contributions}

In this paper, we propose a new efficient signature scheme called \eda.  
\eda~makes use of an Elliptic Curve Discrete Logarithm Problem (ECDLP) based  one-way  function   and exploits the homomorphic properties of such functions to (i) linearly add the private key elements to attain a shorter signature and (ii) mask this addition with a one-time randomness $ r $ to achieve a (polynomially-bounded) multiple-time signature scheme.  
We outline the main properties of \eda~as below. 
\begin{itemize}

\item \underline{\em Fast Verification:} 
\eda~provides the fastest signature verification among its counterparts. More specifically, \eda~pushes the limits of   elliptic curve (EC) based signature schemes by providing  nearly $ 2\times  $ faster verification as compared to its fastest counterpart \cite{SchnorrQ}.  

\item \underline{\em Fast Signing:}  The signature generation of \eda~avoids  expensive computations such as  fixed-base scalar multiplication. Therefore, \eda~achieves 33\% faster signing as compared to its fastest counterpart \cite{SchnorrQ}. 

\item \underline{\em Low End-to-End Delay:} 
Due to having the fastest signature generation and  verification algorithms, \eda~achieves nearly $ 40\% $ lower end-to-end delay, as compared to its fastest counterpart \cite{SchnorrQ}.  This might encourage the potential  adoption of  \eda~for applications that require delay-aware authentication.

\item \underline{\em Energy Efficiency:} 
 By avoiding any computationally expensive operation in the signing and verification algorithms, \eda~achieves the lowest energy consumption as compared to its state-of-the-art efficient counterparts.  Specifically,  as shown in Figure \ref{fig:sign}, the verification algorithm in \eda~attains $ 40\% $ lower energy consumption as compared to its most energy efficient counterpart. This makes \eda~potentially suitable for IoT applications wherein the battery-powered devices authenticate telemetry and commands (e.g., aerial drones).

\item \underline{\em Tunable Parameters:}  \eda~enjoys from a highly tunable set of parameters. This allows \eda~to be instantiated with different properties for different applications. For instance, the parameters set that we considered for our implementation on AVR microcontroller enjoys from a smaller public key and private key pair, and if the same scheme is implemented on commodity hardware, it can enjoy from a faster signature generation ($ 2\times $ faster than the scheme in \cite{SchnorrQ})  by incurring a few microseconds on the verification algorithm. 
\end{itemize}
\noindent\textbf{Limitations: } All of the desired properties and efficiency gains in \eda~come with the cost of  larger key sizes. For instance, in the verification efficient instantiation of \eda~(as in Table \ref{tab:Laptop}), which has the largest key sizes, the size of the public key and private key could be as large as $ 32$KB. However, this can be   decreased to  $  16$KB and $ 8 $KB for the private key and public key sizes (respectively) while still maintaining the fastest signature generation and verification algorithms among its counterparts.  We have shown that even with these parameters sizes, \eda~can be implemented on 8-bit AVR  while enjoying from the most computation and energy efficient algorithms as shown in  Figure \ref{fig:sign}, Figure \ref{fig:verify} and Table \ref{tab:AVR}.

\begin{figure*} 
	\centering
	\begin{minipage}[b]{.45\textwidth}
		\includegraphics[width=1\linewidth]{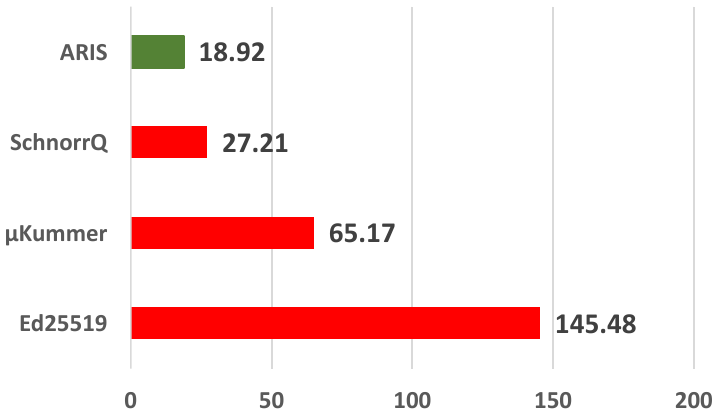}
		\caption{Energy consumption (mJ) for signature generation of \eda~and its counterparts on AVR microcontroller}	\label{fig:sign}
	\end{minipage}\qquad
	\begin{minipage}[b]{.45\textwidth}
		\includegraphics[width=1\linewidth]{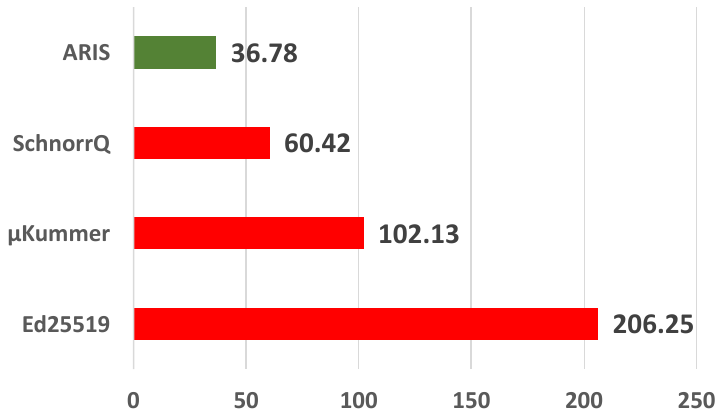}
		\caption{Energy consumption (mJ) for signature verification of  \eda~and its counterparts on AVR  microcontroller}	\label{fig:verify}
	\end{minipage}
	
\end{figure*}
\section{Related Work}\label{sec:related}
 
 One-time signatures (e.g. HORS \cite{HORS_BetterthanBiBa02})  have been proposed to offer fast signing and verification.  Following HORS,  many schemes with different performance and security trade-offs such as time valid one-time signatures (i.e., TV-HORS  \cite{TVHORSInfocom09}) have been proposed. However, these schemes suffer from security and performance penalties incurred due to the need for time-synchronization and their low tolerance for packet loss.  Multiple-time hash-based signatures (e.g., XMSS \cite{XMSS:Buchmann:2011}) utilize Merkle-Tree and can sign multiple messages by keeping the signer's state. Recently, stateless variations (e.g., SPHINCS  \cite{SPHINCS}) have been proposed, however such schemes suffer from large signatures ($ \approx $ 41 KB) and slow signing algorithms.  
%

Recently, a polynomially-bounded multiple-time signature scheme based on HORS design  is proposed \cite{Tachyon}.  The scheme utilizes the additive homomorphic property of the underlying one-way function to obtain fast signatures  where the signer only aggregates private key components  during the online phase.  However, despite its   efficiency,  it cannot meet the stringent delay requirement of some IoT applications. 
Another  proposed scheme called CEDA \cite{CEDA}  exploits the aggregatable property of RSA-based one-way permutation functions and message encoding (as proposed in \cite{HORS_BetterthanBiBa02}) to  attain efficient signing. However, the large parameter sizes not only incur very large public keys but also make the  exponentiations that takes place during signature generation and verification quite costly. Therefore CEDA, while being among the most efficient schemes, does not surpass the latest implementations of signatures on fast elliptic curves. 

In the line of  proposing fast elliptic curves, Renes et al. \cite{Kummer} presented an  efficient instantiations of  the scheme in \cite{Ed25519} based on Kummer surface that shows significant performance gains as compared to its base scheme \cite{Ed25519}.   In 2016, Costello et al.  \cite{SchnorrQ} proposed a new implementation of  \cite{Ed25519} based on another  elliptic curve called  FourQ  which shows to even outperform the implementation in  \cite{Kummer}.   

\section{Preliminaries}\label{sec:Prelim}

\noindent \textbf{Notation.} Given two primes $ p $ and $ q $ we define a finite field \Fq~and a group \Zp.  We also work on \EC~as an elliptic curve over   \Fq. We commonly denote $ P\in \EC $ as a generator of the points on the curve.  $ x\Ra S $ denotes randomly selecting      $ x  $ from a set $ S $. We denote scalars as small letters (e.g., $ x $) and points on curve as capital letters (e.g., $ P $).   We denote tables/matrices as bold capital letters (e.g., $ \mathbf{P} $). We define the bit-length of a variable  as $ |x| $, i.e., $ |x| = \log_2 x $. Scalar and point multiplication is denoted as $ xP $.  We define two Pseudo Random Functions 			$\texttt{PRF}_1: \{0,1\}^*\rightarrow \Zp$  and 	$\texttt{PRF}_2: \{0,1\}^*\rightarrow \{0,1\}^\kappa$  and three hash function $\hash_1:\{0,1\}^* \times \Zp \rightarrow \{0,1\}^{l_1}$, $\hash_2:\EC \rightarrow \{0,1\}^{l_2}$, and $\hash_3:\{0,1\}^* \times \{0,1\}^{l_2} \rightarrow \{0,1\}^{l_1}$  for some integers $ l_1 $ and $ l_2 $, to be defined in Section \ref{sec:Performance}.

\begin{definition} \label{def:ECDLP}\normalfont ({\em Elliptic Curve Discrete Logarithm Problem}) For \EC~as an elliptic curve over a finite field \Fq,  given  $ P,Q\in \EC  $,  the Elliptic Curve Discrete Log Problem (ECDLP) asks  to find $ k  \in \Zp$,  if  it exists, such that  $ Q=kP $.
 
\end{definition}

\begin{definition}\normalfont
	A  signature scheme consists of three algorithms $  \texttt{SGN} = (\texttt{Kg,Sig,Ver}) $ defined as follows.
	
	\begin{itemize}\normalfont
		\item[--] $ (\sk, \pk) \leftarrow \texttt{SGN.Kg}(1^\kappa) $: Given the security parameter $ \kappa $, it outputs the private and public key pair $ (\sk,\pk) $.
		
		\item[--] $ \sigma \leftarrow \texttt{SGN.Sig}(m,\sk) $:
		Given the message   $ m $ and the signer's private key  $\sk$, it outputs the signature $ \sigma $.
		\item[--] $ \{0,1\} \leftarrow \texttt{SGN.Ver}(m, \sigma, \pk) $: Given a message-signature pair   ($m,\sigma $), and the claimed signer's public key  $\pk$, it outputs a decision bit  $ d \gets \{0,1\} $.
	\end{itemize}
\end{definition}

In the following definition, we define the security of signature schemes based on the methodology proposed in \cite{Bellare-2006}.  After the initialization phase i.e., $ \texttt{SGN.Kg}(\cdot) $,  The adversary \A  is given access to the signature generation oracle. \A  wins, if it outputs a \emph{valid} message-signature pair (that was not previously outputted from the sign oracle)  after making polynomially-bounded number of queries. 

\begin{definition}\label{def-EUCMA}\normalfont
Existential Unforgeability under Chosen Message Attack (EU-CMA)  experiment $ \mathit{Expt}^{\mathit{EU-CMA}}_{\mathtt{SGN}} $   is defined as follows.
\begin{itemize}\normalfont
\item [--] $ (\sk,\pk)\leftarrow \texttt{SGN.Kg}(1^\kappa) $
\item[--] $ (m^*, \sigma^*)\leftarrow  \mathcal{A}^{\texttt{SGN.Sig}(\cdot)} (\pk) $
\item[--]  If $  1 \leftarrow {\texttt{SGN.Ver}(m^*, \sigma^*, \pk)} $ and $ m^* $ was not queried to $ \texttt{SGN.Sig}(\cdot) $, return 1, else, return 0.

\end{itemize}

The EMU-CMA advantage of    \A     is defined as $ \mathit{Adv}^{\textit{EU-CMA}}_{\texttt{SGN}} =  \Pr[ \mathit{Expt}^{\mathit{EU-CMA}}_{\texttt{SGN}}= 1]$.
\end{definition}

\section{Proposed Scheme}\label{sec:proposed}

\eda~leverages the  homomorphic property of its underlying ECDLP-based one-way function, which is due to the exponent product of powers property, to achieve (polynomially-bounded)  multiple-time signatures from the one-time signature scheme proposed in \cite{Reyzin2002}, with   more compact signatures.  More specifically, in \eda, the private key consists of $ t $ randomly generated values $ x_i $ (generated using a $ \kappa $ bit seed $ z $) and the corresponding public key consists of all $ Y_i \gets  x_i P $ for  $ i \in \{1,\dots,t\} $.   

 To sign a message,  the signer obtains $ k $ indexes $ (i_1, \dots, i_k) $ by hashing the message (and a random input), uses the indexes  $ (i_1, \dots, i_k) $ to retrieve the corresponding private key elements (i.e., $  x_{i_j}$ where $ j \in \{1,\dots,k\} $) and sums them along with a one-time randomness $ r $.  The signature consists  of $ s $ and $ h $, which is obtained by applying the hash function $ \hash_2(\cdot) $  on $ R $,  that is computed as the output of applying the  one-way  function 
 on the one-time randomness $ r $.  
 
 Verification takes place by computing the summation of  the corresponding public key elements  (i.e., $  Y_{i_j}$)  and their subtraction from the output of the ECDLP-based one-way function applied on  $ s $.  The verifier outputs \emph{valid} if the subtraction  yields the same value of $ R $ as computed in the signature generation. Additionally, \eda~uses the BPV method in~\cite{BPV:basepaper:1998} to convert  an EC scalar multiplication to only $ k $ (where  $ k =18 $ or $k=28$ for our proposed parameter sets) EC point additions with the cost of storing a small, constant-size table.

Our scheme consists of the following algorithms.  
\noindent  $\underline{(\sk,\pk)\as \edakg(1^{\kappa})}$: Given the security parameter $ \kappa  $, this algorithm selects parameters $ (t,k) $  such that $     \binom{t}{k} \geq 2^\kappa$ and $z  \Ra \Zp $ and works as follows.
 \begin{enumerate} 
   
 \item Compute $ x_i \gets \prf_1(z,i)$  and  $ Y_i  \gets x_iP \mod  q $ for $ i\in \{1,\dots,t\} $ and set $ \mathbf{Y} \gets \{Y_i\}_{i=1 } ^{ t} $.
 \item Compute $ r_i \gets \prf_2(z,i)$  and  $ R_i  \gets r_iP \mod  q  $ for $ i\in \{1,\dots,t\} $ and set $ \mathbf{R} \gets \{R_i\}_{i=1 } ^{ t} $. 
 	\item Output    $   pk \gets   \mathbf{Y} $ and  $\sk \gets (z,\mathbf{R})   $ as the public key and private key, respectively.
 \end{enumerate}

\noindent $\underline{\sigma\as \edasig(m,\sk)}$: Given a message $ m\in \{0,1\}^* $ to be signed, this algorithm works as follows.
\begin{enumerate}
	\item Compute $ (i'_1,\dots,i'_k) \gets \hash_1(m,z)$ where $ |i'_j| \leq |t| $ for $ j\in \{1,\dots,k\} $.
	\item Compute $ r_{i'} \gets \prf_2(z,i'_j)$  for $ j\in \{1,\dots,k\} $, set $ r \gets \sum_{i'=1}^{k} r_{i' } \mod p $.
	\item Retrieve $ R_{i'} \gets \mathbf{R}[i'_j] $  for $ j\in \{1,\dots,k\} $,  compute $ R \gets  \sum_{i'=1}^{k} R_{i'}\mod  q $ and $ h\gets \hash_2(R) $.
	\item Compute $ (i_1,\dots,i_k) \gets \hash_3(m,h)$ (where $ |i_j| \leq |t| $)  and $ x_{i}\gets \prf_1(z,i_j)$  for $ j\in \{1,\dots,k\} $. 
	\item  Compute  $ s \gets r  - \sum_{i=1}^{k}x_{i} \mod p $ and output $ \sigma \gets (s,h) $.
 	
\end{enumerate}

 \noindent $\underline{\zo\as \edaver(m,\sigma,\pk)}$: Given a message-signature pair $  (m,\sigma) $ and $ pk  $, this algorithm works as follows.
 \begin{enumerate}
 	\item Parse $ (s,h) \gets   \sigma $ and compute $(i_1,\ldots,i_k) \as \hash_3(m,h)$,   where $ |i_j| \leq |t| $ for $ j\in \{1,\dots,k\} $.
 	\item Retrieve   $ Y_i\gets \mathbf{Y} [i_j] $ for $ j\in \{1,\dots,k\} $) and set $ Y \gets \sum_{i=1}^{k} Y_i \mod q  $.
 	\item Compute $ R' \gets sP + Y  \mod q  $ and check if $ \hash_2(R')  = h$ holds output \emph{valid}, and \emph{invalid } otherwise. 

 \end{enumerate}

\section{Security Analysis}\label{sec:security}
 We prove that \eda~is EU-CMA secure,  as defined in Definition \ref{def-EUCMA}, in the Random Oracle Model (ROM) \cite{Bellare:1993}. The proof   uses the Forking Lemma \cite{Bellare-Neven:2006}. 
 
 \begin{theorem}
In the ROM, if  adversary \A can  $(q_S,q_H) $-break the EU-CMA  security of  \eda~after making $ q_H $ and $ q_S $ random oracles and signature queries, respectively; then we can build another algorithm \B~that runs \A as a subroutine and can solve an instance of the ECDLP (as in Definition \ref{def:ECDLP}).
 \end{theorem}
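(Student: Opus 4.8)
The plan is to establish the reduction through a short sequence of game hops and then an application of the Forking Lemma~\cite{Bellare-Neven:2006}. I would first pass to a game in which $\prf_1$ and $\prf_2$ are replaced by independent truly random functions, so that all $x_i,r_i$ become uniform and independent over \Zp; this costs only the two PRF‑distinguishing terms plus a negligible amount, because the seed $z$ is a uniformly random $\kappa$‑bit string that the adversary \A never sees directly and that affects the transcript only through the random‑oracle value $\hash_1(\cdot,z)$, and the probability that \A ever queries $\hash_1$ at an input whose second component equals $z$ is at most $q_H 2^{-\kappa}$. In this game I would construct \B as follows: on an ECDLP instance $(P,Q)$ it samples a guess $\ell^\ast\Ra\{1,\dots,t\}$, sets $Y_{\ell^\ast}\gets Q$, samples $a_i\Ra\Zp$ and sets $Y_i\gets a_iP$ for every $i\neq\ell^\ast$, and hands the public key $\mathbf{Y}$ to the adversary; since $Q$ generates the group, $\mathbf{Y}$ is distributed exactly as an honestly generated public key.

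The key observation for answering the signing queries is that, as with Schnorr‑type schemes, the homomorphic verification relation lets \B produce valid signatures with no knowledge of any $x_i$. To answer a query on a fresh message $m$, \B samples $s\Ra\Zp$ and $h\Ra\{0,1\}^{l_2}$, reads $(i_1,\dots,i_k)\gets\hash_3(m,h)$ (a freshly sampled random tuple, since $(m,h)$ is new except with negligible probability), forms $R'\gets sP+\sum_{j=1}^{k}Y_{i_j}$, programs $\hash_2(R')\gets h$, caches the answer, and returns $(s,h)$. Correctness of \eda~gives $R'=(r-\sum_j x_{i_j})P+\sum_j Y_{i_j}=rP$, so $R'$ is precisely the point an honest signer would have hashed, and the output matches the real distribution except on the events that $\hash_2$ was already defined at $R'$ or that $(m,h)$ clashed with an earlier $\hash_3$‑query, both negligible because $s$ (hence $R'$) and $h$ are fresh and uniform; remaining oracle calls are answered by lazy sampling consistent with the programmed entries. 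This is the step where one must invoke cover‑freeness of the index families generated by $\hash_1$ and $\hash_3$ — which the parameter choice $\binom{t}{k}\ge 2^\kappa$ together with the polynomial bound on $q_S$ is meant to guarantee — so that recycling the $r_i$ across different signatures does not skew the distribution of $s$.

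Next I would read this simulator as a forking algorithm. When \A halts with a forgery $(m^\ast,(s^\ast,h^\ast))$ on an unqueried $m^\ast$, validity means $\hash_2\!\big(s^\ast P+\sum_j Y_{i_j^\ast}\big)=h^\ast$ with $(i_1^\ast,\dots,i_k^\ast)=\hash_3(m^\ast,h^\ast)$, and a routine argument shows that, except with probability about $2^{-l_2}$, \A itself made the query $\hash_3(m^\ast,h^\ast)$; I record its position as the forking index. Rerunning \B on the same coins and public key but with independent fresh answers from that query onward, the Forking Lemma gives, with probability at least roughly $\epsilon(\epsilon/q_H-2^{-\kappa})$ where $\epsilon$ is the forging probability of \A in the final game, two accepting runs that agree on $m^\ast$ and on $h^\ast$ — hence on the target point $R^\ast$ with $\hash_2(R^\ast)=h^\ast$, since otherwise we would have found a collision for the random oracle $\hash_2$ — but produce different index tuples $S\neq S'$. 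Subtracting the two verification equations then yields $(s^\ast-\hat s)P=\sum_{i\in S'}Y_i-\sum_{i\in S}Y_i$, where $\hat s$ is the scalar of the second forgery.

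Finally, by construction $\sum_{i\in S'}Y_i-\sum_{i\in S}Y_i=cP+eQ$ for a scalar $c$ that \B computes from the known $a_i$ and with $e=[\ell^\ast\!\in\!S']-[\ell^\ast\!\in\!S]\in\{-1,0,1\}$; conditioned on the fork, $S$ and $S'$ are independent uniform $k$‑subsets, so $e\neq 0$ with probability $2\tfrac{k}{t}\big(1-\tfrac{k}{t}\big)$, and in that case \B outputs $\log_P Q=e^{-1}(s^\ast-\hat s-c)\bmod p$ as its ECDLP solution (Definition~\ref{def:ECDLP}). Tallying the losses, \B succeeds with probability at least roughly $2\tfrac{k}{t}\big(1-\tfrac{k}{t}\big)\big(\epsilon^2/q_H-\epsilon 2^{-\kappa}\big)$ minus the negligible simulation terms, which is non‑negligible whenever $\epsilon$ is. I expect the main obstacle to be arguing faithfulness of the signing oracle: one must show that the $q_S$ equations $s=\sum_{\hash_1(m,z)}r_i-\sum_{\hash_3(m,h)}x_i$ that \A collects never pin the private key enough to detect that $\hash_2$ is being programmed (equivalently, to forge), which is exactly where cover‑freeness of the $\hash_1$‑ and $\hash_3$‑induced families enters, with a secondary difficulty in tracking the precise constants in the forking bound and the $e\neq0$ event.
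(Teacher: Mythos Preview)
Your reduction follows essentially the same route as the paper's proof: embed the ECDLP challenge as a single randomly chosen public-key component $Y_{\ell^\ast}$, answer signing queries by choosing $s$ uniformly and programming $\hash_2$ (and $\hash_3$) so that the verification equation holds, fork at the critical $\hash_3(m^\ast,h^\ast)$ query to obtain two forgeries with the same $R^\ast$ but different index tuples, and extract the discrete log when $\ell^\ast$ lies in exactly one of the two tuples. Your write-up is in fact more careful than the paper's in several places---you make the PRF-to-random game hop explicit, you give concrete loss factors for the fork and for the event $e\neq 0$, and you correctly flag the subtlety that simulated $s$-values are fresh uniform while real ones share the pool $\{r_i,x_i\}$, which is where the cover-free structure of the $\hash_1/\hash_3$ index families is needed---none of which the paper's proof spells out.
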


\begin{proof}
We let $ Y^*\Ra \EC $  be an instance the  ECDLP  for algorithm \B~to solve. On the input of $ Y^* $ and $ z\Ra \Zp $, \B~works as follows. 

\noindent\underline{\emph{Setup:}} \B~keeps three lists $ \mathcal{L}_i $ for $ i\in\{1,2,3\} $ to keep track of the outputs of the random oracles and a list  $ \mathcal{L}_m $ to store the messages submitted to the sign oracle. \B~sets up the random oracle $RO$-$Sim(\cdot) $ to handle the hash functions and generates the users' public keys as follows.
\begin{itemize}
 \item \emph{Setup} $RO$-$Sim(\cdot)$: \B~implements $RO$-$Sim(\cdot) $ to handle queries  to hash functions $ \hash_1,\hash_2 $ and $ \hash_3 $, which are modeled as random oracles, as follows. 
 
 \begin{enumerate}
 	\item  $ \alpha_1 \gets  RO$-$Sim(m,z, \mathcal{L}_1)$: If $(m,z) \in   \LL_1$, it returns the corresponding value $ \alpha_1 $. Else, it returns $ \alpha_1 \Ra \{0,1\}^{l_1} $ as the answer and adds $  (m,z,\alpha_1) $ to $ \LL_1 $.
 	 	\item  $ \alpha_2 \gets  RO$-$Sim(R, \mathcal{L}_2)$: If $ R \in   \LL_2$, it returns the corresponding value $ \alpha_2 $. Else, it returns $ \alpha_2 \Ra \{0,1\}^{l_2} $ as the answer and adds $  (R,\alpha_2) $ to $ \LL_2 $.
 	  	 	\item  $ \alpha_3\gets  RO$-$Sim(m,h, \mathcal{L}_3)$: If $ (m,h) \in   \LL_3$, it returns the corresponding value $ \alpha_3 $. Else, it returns $ \alpha_3\Ra \{0,1\}^{l_1} $ as the answer and adds $  (m,h,\alpha_3) $ to $ \LL_3 $.	
 	 
 \end{enumerate}
 \item \emph{Setup Public Key}: Given the parameters $ (p,q,P,t,k) $, \B~works as follows to generate the user public key. 
  \begin{enumerate}
 	\item  Select $ j\Ra [1,t]$ and sets the challenge public key element $ Y_j \gets Y^* $.
 	\item Generate $ x_i \Ra  \Zp $ for $ i\in \{1,\dots,t\} $ and $ i\neq j $. 
 	\item Compute $ Y_i \gets x_i P $ for $ i\in \{1,\dots,t\} $ and $ i\neq j $.
 	\item Set $ \sk \gets \{x_i\}_{i=1,i\neq j} ^t $ and $ \pk \gets \{Y_1,\dots,Y_t\}$.
 	
 \end{enumerate}
\end{itemize}

\noindent\underline{$ \mathcal{A} $\emph{'s Queries:}} \A queries the hash functions $ \hash_i $ for $ i\in\{1,2,3\} $ and the sign oracle   for up to $ q_H $ and $ q_S $ times, respectively. \B~works as follows to handle these queries. 
\begin{itemize}
  \item \emph{Hash Queries:} $ \mathcal{A}$'s queries to hash functions $ \hash_1,\hash_2 $ and $ \hash_3 $ are handled by the $RO$-$Sim(\cdot) $ function described above.
  \item \emph{Signature Queries:}  \B~works as follows to answer $ \mathcal{A}$'s  signature query on message $ m $. If $ m\in \LL_m $, \B~retrieves the corresponding signature from $ \LL_m  $ and returns to $ \mathcal{A}$. Else, if $ m\notin \LL_m $, it works as follows.  

 \begin{enumerate}
	\item   Select $ s\Ra \Zp$ and compute $ S\gets sP $.
	\item Select $ k $   indexes $ (i_1,\dots i_k) \Ra [1,\dots,t]$.
 	\item  Set $ R  \gets  S - \sum_{i=1}^{k}Y_i$ and $ \alpha_2 \gets \{0,1\}^{l_2} $ and add $ (R,\alpha_2)  $ to $ \LL_2 $.
 	\item  If $  (\langle i_1,\dots i_k\rangle,h) \in \LL_3 $ abort. Else, add $  (m,h,\langle i_1,\dots i_k\rangle) $ to $ \LL_3 $. 
 	\item  Output $ \sigma =(s,h) $ to $ \mathcal{A}$ and add  $ (m,\sigma)\in\LL_m $.
\end{enumerate}

\end{itemize}
    \noindent\underline{$ \mathcal{A} $\emph{'s Forgery:}} Eventually, \A outputs a forgery  $ \sigma^* =(s^*,h^*) $ on message $ m^* $ and public key $ pk  $. Following the EU-CMA definition (as in Definition \ref{def-EUCMA}), \A only wins the game if $ \edaver(m^*,\sigma^*,pk) $ returns $ valid $ and $ m^* $ was never submitted to signature queries in the previous stage (i.e., $ m^*\notin \LL_m $).
    
\noindent\underline{\emph{Solving the ECDLP:}}  If \A does not output a valid forgery before making $ q_H $ hash queries and $ q_S $ signature queries, \B~also fails to solve the instance of ECDLP. Otherwise, if \A outputs a valid forgery    $ ( m^*,\sigma^* =\langle s^*,h^*\rangle ) $, using the forking lemma, \B~rewinds \A with the same random tape as in \cite{Bellare-Neven:2006}, to get a second forgery $  ( m',\sigma' =\langle s',h'\rangle ) $ where, with an overwhelming probability  $ s^* \neq s'  $ and $ h^*  = h' $. Based on  \cite[Lemma 1]{Bellare-Neven:2006}, $ \hash_3(m^*,h^*) \neq \hash_3(m^*,h') $, therefore, given $ (m^*,h^*) \in \LL_3$ and $ (m^*,h') \in \LL_3$,  \B~can solve a random instance of the ECDLP problem (i.e., $ Y^* $) if one of the following conditions hold. 
 \begin{itemize}
\item \textbf{Case 1:} For   $(i^*_1,\dots,i^*_k)  \gets \hash_3(m^*,h^*) $  and  $(i'_1,\dots,i'_k) \gets  \hash_3(m^*,h')    $   we have $ j\in (i^*_1,\dots,i^*_k) $ and $ j\notin (i'_1,\dots,i'_k) $.
\item  \textbf{Case 2:} For   $(i^*_1,\dots,i^*_k)  \gets \hash_3(m^*,h^*) $  and  $(i'_1,\dots,i'_k) \gets  \hash_3(m^*,h')    $   we have $ j\notin (i^*_1,\dots,i^*_k) $ and $ j\in (i'_1,\dots,i'_k) $.
\end{itemize}
If any of the above cases holds, \B~works as follows. If Case 1 holds, $ x_j \gets s^* -\sum_{\eta=1,\eta\neq j}^{k} x_{i^*_\eta}-s'-\sum_{\eta=1}^{k} x_{i'_\eta}\mod p$. Else, if Case 2 hold, $ x_j \gets s' -\sum_{\eta=1,\eta\neq j}^{k} x_{i'_\eta}-s^*-\sum_{\eta=1}^{k} x_{i^*_\eta}\mod p$. \end{proof}
\section{Performance Evaluation}\label{sec:Performance}
 
We have fully implemented \eda~on Four$ Q $ curve \cite{FourQ} which is known to be the fastest EC that provides 128-bit of security. We provide   implementations of \eda~ on both  commodity hardware and 8-bit microcontroller to evaluate its performance since most IoT applications are comprised  of them both (e.g., commodity hardware as servers or control centers and microcontrollers as IoT devices connected to sensors).  We compare the performance of \eda~with state-of-the-art digital signature schemes on both of these platforms, in terms of computation, storage and communication. Our implementation is open-sourced at the following link.

\begin{center}
	\url{https://github.com/rbehnia/ARIS} 
\end{center}




\subsection{Performance on Commodity Hardware}

\subsubsection{Hardware Configurations} We used a laptop equipped with Intel i7 Skylake processor @ $2.60$ GHz and $12$ GB RAM. 

\subsubsection{Software Libraries} We implemented \eda~using the open-sourced Four$ Q $ implementation~\cite{FourQ}, that offers the fastest EC operations, specifically EC additions that is critical for the performance of \eda. We used an Intel processor as our commodity hardware and leveraged Intel intrinsics to optimize our implementation. Specifically, we implemented our $\prf$ functions with Intel intrinsics (AES in counter mode). We used   blake2 as our hash function~\cite{blakeHash} due to its efficiency. 

We ran the open-source implementations of our counterparts on our hardware to compare their performance with \eda.

\subsubsection{Parameter Choice} Since we implement \eda~on Four$ Q $ curve, we use its parameters given in~\cite{FourQ}, which provide $128$-bit security. Other than the curve parameters, the choice of $t,k$ also plays a crucial role for the security of \eda. Specifically, $k$-out-of-$t$ combinations should also provide $128$-bit security to offer this level of security overall. On the other hand, we  can tune these parameters to achieve  our desired security level with different performance trade-offs. If we increase $t$ and decrease $k$, this results in a larger storage with faster computations,  and vice versa. For our commodity hardware implementation, we choose $t=1024$ and $k=18$, that we believe offers a reasonable trade-off between storage and computation as well as offering the desired $128$-bit security level. We set $ l_1 = 180 $ and $ l_2 =256 $.

\subsubsection{Experimental Results} We present the results of our experiments in Table \ref{tab:Laptop}. We observe that \eda~offers very fast signature generation and verification. It only takes 9 microseconds to generate a signature and 12 microseconds to verify it. This is the fastest among our counterparts, where the closest is SchnorrQ. Furthermore, if we use the same parameters set as for the AVR microcontroller, we can further speed up the signature generation to $ 6.5 $ microseconds, with the cost of a few microseconds on the verification speed. In SchnorrQ, a scalar multiplication is required in signature generation and a double scalar multiplication in verification. In \eda, EC additions are required for signature generation and verification is done with a scalar multiplication and EC additions. This corresponds to a $33\%$ faster signature generation and $83\%$ faster verification for \eda, compared to SchnorrQ. Therefore, we believe \eda~can be an ideal alternative for real-time applications.

\eda~signature size is the same with its EC-based counterparts \cite{ECDSA,Ed25519,Kummer,SchnorrQ} , that is significantly lower than its RSA-based and hash-based counterparts \cite{RSA,CEDA,SPHINCS}. On the other hand, \eda~comes with a larger private and public key, that is $32$ KB. 

\begin{table*}[t!]
	\centering
 	\vspace{ 2mm}

	\caption{Experimental performance comparison of \eda~and its counterparts on a commodity hardware} \label{tab:Laptop}
	\begin{threeparttable}
		\begin{tabular}{| c || c | c | c | c |  c | c | }
			\hline
			\textbf{Scheme} & \specialcell[]{\textbf{Sign Generation}\\  \textbf{ Time (}$\mu$s\textbf{)}} & \specialcell[]{\textbf{Private Key}\textsuperscript{$ \dagger$} \\ \textbf{(KB)}} & \specialcell[]{\textbf{Signature }\\  \textbf{Size (KB)}} & \specialcell[]{\textbf{Signature Verification}\\  \textbf{ Time (}$\mu$s\textbf{)}} & \specialcell[]{\textbf{Public Key} \\ \textbf{(KB)}} & \specialcell[]{\textbf{End-to-End} \\ \textbf{Delay (}$\mu$s\textbf{)}} \\ \hline \hline 
			
			
			SPHINCS~\cite{SPHINCS} & $13458$ & $1.06$ & $41000$ & $370$ & $1.03$ & $13828$\\ \hline
				TACHYON~\cite{Tachyon} & $138$ & $0.016$ & $4.4$ & $18$ & $864$ & $156$\\ \hline
			RSA~\cite{RSA}  & $8083$ & $0.75$ & $0.41$ & $48 $& $0.38$ & $8131$ \\ \hline 
			
			CEDA~\cite{CEDA}  & $55$ & $0.41$ & $0.41$ & $115$ & $384.38$ & $170$ \\ \hline 
			
			
			ECDSA~\cite{ECDSA}  & $725 $& $0.03$ & $0.06$ & $927$ & $0.03$ & $1652$\\ \hline
			
			Ed25519~\cite{Ed25519} &  $132$ &$ 0.03$ & $0.06$ & $335$ & $0.03$ & $467$ \\ \hline
			
			Kummer~\cite{Kummer} &  $23$ & $0.03$ & $0.06$ & $38$ & $0.03$ & $61$ \\ \hline 
			
			SchnorrQ~\cite{SchnorrQ} &  $12$ & $0.03$ & $0.06$ & $22$ & $0.03$ & $34$ \\ \hline \hline
			
			\eda  & $\mathbf{9}$ & $32.03$ & $0.06$ & $\mathbf{12}$ & $32$ & $\mathbf{21}$ \\ \hline 			
			
%
%
			
		\end{tabular}
		\begin{tablenotes}[flushleft]\scriptsize{  
				
				\item $ \dagger $ System wide parameters (e.g., p,q,$\alpha$) for each scheme are included in their corresponding codes, and private key size denote to specific private key size.
				
				
			}
		\end{tablenotes}
	\end{threeparttable}
\end{table*}
\begin{table*} [t]
	\centering
	\caption{Experimental performance comparison of \eda~and its counterparts on 8-bit AVR microcontroller} \label{tab:AVR}
	\begin{threeparttable}
		\begin{tabular}{| c || c | c | c | c |  c | c | }
			\hline
			\textbf{Scheme} & \specialcell[]{\textbf{Signature Generation}\\  \textbf{ Time (}s\textbf{)}} & \specialcell[]{\textbf{Private Key} \\ \textbf{(KB)}} & \specialcell[]{\textbf{Signature }\\  \textbf{Size (KB)}} & \specialcell[]{\textbf{Signature Verification}\\  \textbf{ Time (}s\textbf{)}} & \specialcell[]{\textbf{Public Key} \\ \textbf{(KB)}} & \specialcell[]{\textbf{End-to-End} \\ \textbf{Delay (}s\textbf{)}} \\ \hline \hline

			
			
			ECDSA~\cite{ECDSA} & $1.77$ & $0.03$ & $0.06$  & $1.80$ & $0.03$ & $3.57$ \\ \hline
			
			Ed25519~\cite{Ed25519, Ed255198bit} & $1.45$ & $0.03$ & $0.06$  & $2.06$ & $0.03$ & $3.51$  \\ \hline 
			
			$\mu$Kummer~\cite{Kummer, Kummer8bit} &  $0.65$& $0.03$ & $0.06$  & $1.02$& $0.03$  & $1.67$  \\ \hline 
			
			SchnorrQ~\cite{SchnorrQ, FourQ8bit} & $0.27$& $0.03$ & $0.06$  & $0.60$& $0.03$  & $0.87$\\ \hline \hline
			
			\eda  & $\mathbf{0.19} $ & $16$ & $0.06$ & $\mathbf{0.37}$ & $8$ & $\mathbf{0.56}$ \\ \hline
			
		\end{tabular}
		\begin{tablenotes}[flushleft] \scriptsize{
				
			} 
			
		\end{tablenotes}
	\end{threeparttable}
	 \vspace{-4mm}
\end{table*}

\subsection{Performance on 8-bit AVR}

\subsubsection{Hardware Configurations} We used an 8-bit AVR ATmega 2560 microcontroller as our IoT device to implement \eda. ATmega 2560 is equipped with $256$ KB flash memory, $8$ KB SRAM and $4$ KB EEPROM, with a maximum clock frequency of $16$ MHz. ATmega 2560 is extensively used in practice for IoT applications (especially in medical implantables) due to its energy efficiency~\cite{ATmega2560Medical}.

\subsubsection{Software Libraries} We implemented \eda~on ATmega 2560 using the 8-bit AVR implementation of Four$ Q  $ curve~\cite{FourQ8bit}, that provides the basic EC operations and a blake2 hash function. We implemented our scheme with IAR embedded workbench and used its cycle-accurate simulator for our benchmarks. 

As for our counterparts, we used their open-sourced implementations~\cite{Ed255198bit,Kummer8bit,FourQ8bit,microECC}. Note that we only compare \eda~with its EC-based counterparts, due to their communication and storage efficiency. Moreover, resource-contrained processors such as ATmega 2560 may not be suitable for heavy computations (e.g., exponentiation with 3072-bit numbers in RSA \cite{RSA}  and CEDA \cite{CEDA}). 

\subsubsection{Parameter Choice} As mentioned, \eda~can be instantiated with different $t,k$ values that offers a trade-off between storage and computation. Since ATmega 2560 is a storage-limited device, we select our parameters as $ t = 256$ and $k = 28$ to offer storage efficiency. Moreover, this allows us to store the private components ($x_i$ and $r_i$), instead of deterministically generating them at signature generation, and still have a tolerable storage even for an 8-bit microcontroller.  We also set $ l_1 \gets 224 $ and $ l_2 \gets 256 $.


\subsubsection{Experimental Results} Table \ref{tab:AVR}~shows the performance of \eda~compared with its  counterparts. The speed improvements of \eda~can also be observed for ATmega 2560. \eda~is $42\%$ faster in signature generation and $76\%$ faster in signature verification compared to its closest counterpart \cite{SchnorrQ}. This can translate into a significant practical difference when considered real-time applications that require fast authentication. Note that these benchmarks are obtained with a more ``storage friendly'' parameter choice, and can be further accelerated with different parameter choices where the microcontroller  is not memory-constraint.

One may notice that due to our parameter choice, the key sizes in our 8-bit microcontroller implementation are smaller. As aforementioned, this is because we select a different parameter set for $t,k$. Moreover, we store the private components as well, that correspond the 8 KB of the signer storage. Since we store these keys on the flash memory of ATmega 2560, they only correspond to $6\%$ and $3\%$ of the total memory, for private key and public key, respectively. Therefore, although we have significantly larger keys than our EC-based counterparts, it is still feasible to store them even on highly resource-constrained 8-bit microcontrollers.

\subsubsection{Energy Efficiency} It is highly desirable to minimize the energy consumption of cryptographic primitives in IoT applications to offer a longer battery life. For microcontrollers, energy consumption of the device can be measured with the formula $E = V * I * t$, where $V$ is voltage, $I$ is current and $t$ is the computation time~\cite{Precomputation:LowCostSig:Ateniese:NDSS2013}. Considering that the voltage and the current of a microcontroller are constant when the device is active, the energy consumption linearly increases with the computation time. Since \eda~offers the fastest signature generation and verification, energy consumption of \eda~is the lowest among its counterparts, and therefore would be preferred in applications that require longer battery life.


 \section{Conclusion}\label{sec:Conclusion}
In this paper, we presented a new efficient signature scheme to meet the strict minimum  delay requirements of  some real-time IoT systems.  This is achieved by harnessing the homomorphic property of the underlying ECDLP-based one-way function  and the precomputation technique proposed in \cite{BPV:basepaper:1998}. Our experimental results showed that the proposed scheme outperforms its state-of-the-art counterparts in signing and verification speed  as well as in energy efficiency.  The proposed scheme is shown to be secure, in the Random Oracle Model, under the hardness of the ECDLP. We   open-sourced our implementation  to enable public testing and verification. 
\vspace{1mm}

\noindent \textbf{Acknowledgment. }  This work is supported by  the Department of Energy award DE-OE0000780 and NSF award  \#1652389.

\bibliographystyle{IEEEtran}
 \bibliography{crypto-etc}

\end{document}